\documentclass[floatfix,twocolumn,notitlepage,amsfonts,amsmath,amssymb,nofootinbib,superscriptaddress]{revtex4-2}

\pdfoutput=1

\usepackage[utf8]{inputenc}
\usepackage[T1]{fontenc}
\usepackage{accents}
\usepackage[colorlinks]{hyperref}

\usepackage{booktabs}
\usepackage{longtable}

\usepackage[normalem]{ulem}
\usepackage{amsthm}
\usepackage{mathtools}
\usepackage{bbm}
\usepackage{bm}
\usepackage{verbatim}
\usepackage{graphicx}
\usepackage{tikz}

\usepackage{soul}
\usepackage{cancel}

\usepackage{float}

\usetikzlibrary{arrows}

\usepackage{lineno}

\newtheorem{theorem}{Theorem}
\newtheorem{proposition}[theorem]{Proposition}

\newtheorem{corollary}[theorem]{Corollary}

\newtheorem{lemma}[theorem]{Lemma}

\newtheorem{definition}[theorem]{Definition}

\def\bra#1{\mathinner{\langle{#1}|}}
\def\ket#1{\mathinner{|{#1}\rangle}}

\def\ketbra#1#2{\mathinner{|{#1}\rangle\!\langle{#2}|}}

{\catcode`\|=\active 
  \gdef\Braket#1{\left<\mathcode`\|"8000\let|\BraVert {#1}\right>}}
\def\BraVert{\egroup\,\mid@vertical\,\bgroup}

\DeclareMathOperator{\Tr}{Tr}

\DeclareMathOperator{\Span}{span}

\renewcommand\L{\mathcal{L}}

\newcommand{\cH}{\mathcal H}
\newcommand{\N}{\mathcal{N}}

\newcommand{\K}{\mathcal{K}}

\newcommand{\X}{\mathcal{X}}

\newcommand{\id}{\mathbbm{1}}

\newcommand{\Nset}{\mathcal N}
\newcommand{\supp}{\operatorname{supp}}
\newcommand{\range}{\operatorname{range}}

\newcommand*\patchAmsMathEnvironmentForLineno[1]{%
  \expandafter\let\csname old#1\expandafter\endcsname\csname #1\endcsname
  \expandafter\let\csname oldend#1\expandafter\endcsname\csname end#1\endcsname
  \renewenvironment{#1}%
     {\linenomath\csname old#1\endcsname}%
     {\csname oldend#1\endcsname\endlinenomath}}% 
\newcommand*\patchBothAmsMathEnvironmentsForLineno[1]{%
  \patchAmsMathEnvironmentForLineno{#1}%
  \patchAmsMathEnvironmentForLineno{#1*}}%
\AtBeginDocument{%
\patchBothAmsMathEnvironmentsForLineno{equation}%
\patchBothAmsMathEnvironmentsForLineno{align}%
\patchBothAmsMathEnvironmentsForLineno{flalign}%
\patchBothAmsMathEnvironmentsForLineno{alignat}%
\patchBothAmsMathEnvironmentsForLineno{gather}%
\patchBothAmsMathEnvironmentsForLineno{multline}%
}

\begin{document}

\title{All causally separable quantum processes \\ are quantum circuits with classical control of causal order}

\author{Julian Wechs}
\affiliation{Centre for Quantum Information and Communication (QuIC), \'Ecole Polytechnique de Bruxelles, C.P.\ 165, Universit\'e libre de Bruxelles, 1050 Bruxelles, Belgium}

\author{Alastair A.\ Abbott}
\affiliation{Univ.\ Grenoble Alpes, Inria, 38000 Grenoble, France}

\author{Cyril Branciard}
\affiliation{Univ.\ Grenoble Alpes, CNRS, Grenoble INP, Institut N\'eel, 38000 Grenoble, France}

\date{September 17, 2026}

\begin{abstract}
The concept of causal (non)separability describes whether the causal order between parties that perform local quantum operations is well-defined or indefinite. Causal (non)separability in the general multipartite setting was introduced and studied in Refs.~\cite{oreshkov16,Wechs2019}. We resolve an open problem from these earlier works by showing -- using a novel ``coherent teleportation technique'' -- that a sufficient condition for causal separability identified in Ref.~\cite{Wechs2019} is also necessary, and thus provides a complete characterisation of multipartite causal separability. A consequence of this result is that all causally separable processes admit a realisation as generalised quantum circuits in which the order between the operations is classically controlled, known as ``quantum circuits with classical control of causal order''~\cite{wechs18}.
\end{abstract}

\maketitle
\section{Introduction}
The investigation of quantum causal relations has emerged as an active area of research in quantum foundations and quantum information. In particular, it has been found that one can conceive of processes in which the causal order between quantum operations is no longer well-defined (see, e.g., Refs.~\cite{oreshkov12,araujo15,oreshkov16,branciard16,abbott16,Wechs2019,baumeler14a,baumeler16}, and Ref.~\cite{Costa26} for a recent review of the topic). 
Such processes are often formally described and studied in the \emph{process matrix framework}, an abstract, ``top-down'' approach, in which one considers multiple parties performing local quantum operations once and only once, without assuming an \textit{a priori} global causal order between them. The object that connects the parties and encodes their causal relations can be represented by a so-called \emph{process matrix}.  

The formal concept that distinguishes process matrices that are compatible with a well-defined causal order from those that are not is \emph{causal (non)separability}. This notion was initially introduced for the bipartite case in Ref.~\cite{oreshkov12}, and generalised to the multipartite case in~\cite{oreshkov16,Wechs2019}. In particular, the multipartite definition of causal separability accounts for the fact that the causal order can be definite, but \emph{dynamical}; that is, the order may not be fixed from the start, but may only be established ``on the fly'', as the parties act one after the other.

In Ref.~\cite{Wechs2019}, a decomposition of multipartite process matrices was proposed and shown to be sufficient for causal separability, while its necessity (beyond the bipartite and tripartite cases) was left as an open question. In this work, we answer this question in the affirmative and thus complete the characterisation of causal separability in the general multipartite case.

This result is of particular interest with regard to the question of which scenarios described in the abstract process matrix framework admit a physical interpretation or realisation. 
While the physicality of processes with indefinite causal order is a major research question and a topic of intense debate (see, e.g., Refs.~\cite{araujo17, Wechs21,Oreshkov18,Wechs23,Wechs25,Kabel24,maclean17,Vilasini24a,Vilasini24b,ormrod22,Paunkovic20})), the situation was not fully clear even for causally separable processes. In Ref.~\cite{Wechs21}, the ``top-down'' perspective of the process matrix framework was complemented by a constructive, ``bottom-up'' approach, and several classes of process matrices realisable by generalised quantum circuits were developed. In particular, \emph{quantum circuits with classical control of causal order} (QC-CCs) were introduced. These are generalised quantum circuits in which the order between operations is established ``on the fly'', in a classically controlled manner, as the circuit evolves. The class of process matrices realisable by QC-CCs was shown to coincide with the class satisfying the sufficient condition for causal separability from Ref.~\cite{Wechs2019}.

Our result therefore shows that causally separable process matrices and QC-CCs are equivalent, i.e., that all processes compatible with a well-defined causal order in the process matrix framework admit such an operational realisation -- as initially conjectured in Ref.~\cite{oreshkov16}, before the QC-CC class was even formalised and characterised.

The main technical ingredient that allows us to complete the proof is a ``coherent teleportation technique''. The multipartite definition of causal separability is recursive: intuitively, an $N$-partite process matrix is causally separable if, with some probability, one can identify a party that acts first, and for any operation performed by this initial party, the resulting $(N-1)$-partite process matrix is again causally separable. In Ref.~\cite{Wechs2019}, necessary conditions for causal separability were derived by choosing the operations of each initial party so as to ``teleport'' its systems to one of the remaining parties, but these conditions could not be shown to coincide with the sufficient condition. By instead choosing the operation of the initial party so as to teleport its systems ``coherently'' to all remaining parties, we are able to complete the proof.

We first give a concise overview of the relevant concepts, definitions and notation. For a more detailed discussion, we refer the reader to the literature, in particular, Ref.~\cite{Wechs2019}. We then present the proof of the main theorem and discuss its implications. 

\section{Background}

In this section, we recall the multipartite process matrix framework and the definition of causal separability, so as to make the paper self-contained. 
We adopt essentially the same notations and conventions as in Ref.~\cite{Wechs2019}.

\subsection{The process matrix framework}

In the following, we describe a generic quantum system by the space $X$ of Hermitian operators on a Hilbert space $\cH^X$. We only consider Hilbert spaces $\cH^X$ of finite dimension, denoted $d_X$. For composite systems, we often use concatenation, so that $XY \coloneqq X \otimes Y$ and $\cH^{XY} \coloneqq \cH^X \otimes \cH^Y$. The identity operator on $\cH^X$ is denoted by $\id^X$ (in general, superscripts of operators or vectors denote the spaces they belong to, which may be omitted when clear from the context). The notation $\id^{X\to X'} := \sum_i \ket{i}^{X'}\bra{i}^X$ is used to denote the ``identity operator'' between two isomorphic spaces $\cH^X$ and $\cH^{X'}$ whose computational bases $\{\ket{i}^X\}_i$ and $\{\ket{i}^{X'}\}_i$, respectively, are in one-to-one correspondence: $\id^{X\to X'}$ just ``reattributes'' states in $\cH^X$ to $\cH^{X'}$. We also denote by $\ket{\Phi^+}^{X/X'}:=\frac{1}{\sqrt{d_{X^{(\prime)}}}}\sum_i \ket{i}^X\otimes\ket{i}^{X'}$ the maximally entangled state across these spaces.

The general multipartite scenario we consider involves $N$ parties, denoted by $A_k$ for $k \in \N \coloneqq \{ 1, \ldots, N\}$. Each party receives an incoming physical system, performs a local operation, which may depend on a classical input $x_k$ and generates a classical output $a_k$, and sends away an outgoing physical system. The central object of interest is the multipartite correlation $P(\vec a | \vec x)$ established after many runs of the experiment (where $\vec x \coloneqq (x_1, \ldots, x_N)$ and $\vec a \coloneqq (a_1, \ldots, a_N)$). 

In the process matrix framework, the incoming and outgoing systems, as well as the local operations, are described by standard quantum theory. We denote the incoming and outgoing spaces of each party by $A_I^k$ and $A_O^k$, respectively. 
We use the shorthand notations $A_{IO}^k \coloneqq A_I^k \otimes A_O^k$, as well as, for a subset $\K \subseteq \N$ of parties, $A_{IO}^\K \coloneqq \bigotimes_{k \in \K} A_{IO}^k$ ($= \mathbb{R}$ if $\K = \emptyset$), $\id^\K \coloneqq \bigotimes_{k \in \K} \id^{A_{IO}^k} = \id^{A_{IO}^\K}$, and $\Tr_\K$ for the trace over all (incoming and outgoing) systems of the parties in $\K$, with $\Tr_{\emptyset}$ the identity operation and $\Tr_{\N}$ the full trace.
For notational simplicity, we identify the parties' names with their labels, and singletons of parties (e.g., $\{A_k\}$) with the parties themselves (e.g., $A_k$) or the corresponding label, so that $\N = \{ 1, \ldots, N\} \equiv \{ A_1, \ldots, A_N\}$, $\N \backslash \{A_k\} \equiv \N \backslash k$, $\Tr_{\{A_k\}} \equiv \Tr_k$, etc.

Each party's local operation is most generally described by a \emph{quantum instrument}~\cite{davies70}, that is, a collection of completely positive (CP) linear maps from $A_I^k$ to $A_O^k$, each of which is associated to one of the classical outcomes $a_k$, and such that the sum over the classical outcomes results in a completely positive trace-preserving (CPTP) map. Using the Choi-Jamio\l{}kowski (CJ) isomorphism~\cite{jamiolkowski72,choi75}, these CP maps can be represented as positive semidefinite matrices $M_{a_k|x_k}^{A_{IO}^k}$, with the corresponding CPTP maps $M_{x_{k}}^{A^{k}_{IO}}\coloneqq \sum_{a_{k}}M_{a_{k}|x_{k}}^{A^{k}_{IO}}$, which satisfy $\Tr_{A_O^k} M_{x_{k}}^{A^{k}_{IO}} = \id^{A_I^k}$. 

Rather than assuming an \emph{a priori} causal order between the parties, in the process matrix framework one only assumes the local validity of quantum theory and requires that the correlations $P(\vec a | \vec x)$ define a valid conditional probability distribution~\cite{oreshkov12}. Under these assumptions, along with multi-linearity in the CP-maps $M_{a_k|x_k}^{A_{IO}^k}$, the correlations can be shown to take the general form
\begin{equation} \label{eq:born_rule_Npartite}
P(\vec a | \vec x) = \Tr\left[ M_{a_1|x_1}^{A_{IO}^1} \otimes \cdots \otimes M_{a_N|x_N}^{A_{IO}^N} \cdot W\right],
\end{equation}
which can be interpreted as a ``generalised Born rule''. Here, $W \in A_{IO}^\N$ is a Hermitian operator called the \emph{process matrix}, which can be understood as the ``physical resource'' or ``environment'' that connects the local operations, and which satisfies the conditions
\begin{equation}
W \ge 0, \ \ W \in \L^\N, \ \ \text{and} \ \ \Tr W = \prod_{k \in \N} d_{A_O^k} \label{eq:valid_Npartite}
\end{equation}
where $\L^\N$ is a particular linear subspace of $A_{IO}^\N$~\cite{oreshkov12,araujo15} (see below).

Positive semidefiniteness $W \ge 0$ follows from the requirement that the probabilities $P(\vec a | \vec x)$ should be positive, even when the parties share an additional, possibly entangled auxiliary state $\rho$ in some extra incoming spaces $A_{I'}^1 \otimes \cdots \otimes A_{I'}^N = A_{I'}^\N$. In this case, the local operations of the parties $M_{a_k|x_k}^{A_{II'O}^k} \in A_{II'O}^k \coloneqq A_I^k \otimes A_{I'}^k \otimes A_O^k$ act on both their original incoming space and the extra incoming space, and they are composed with the ``extended'' process matrix $W \otimes \rho$ as in Eq.~\eqref{eq:born_rule_Npartite}.

The constraint $W \in \L^\N \subset A_{IO}^\N$ follows from the requirement that the probabilities $P(\vec a | \vec x)$ should be constant whenever the local operations are CPTP maps (with the third constraint $\Tr W = \prod_{k \in \N} d_{A_O^k}$ fixing this constant to $1$). The linear subspace $\L^\N$ can be characterised using the following (commutative and associative) ``trace-and-replace'' notation: 
\begin{align}
{}_{X}W \coloneqq (\Tr_X W) \otimes \frac{\id^X}{d_X}\,, \quad {}_{[1-X]}W \coloneqq W - {}_{X}W,
\end{align}
first introduced in Ref.~\cite{araujo15}. $\L^\N$ is the subspace of $A_{IO}^\N$ of all $W$ that satisfy  
\begin{align}
& \forall \ \X \subseteq \N, \X \neq \emptyset, \ {}_{ \prod_{i \in \X}[1-A_O^i]A_{IO}^{\N \backslash \X} } W = 0 \,. \label{eq:constr_LN}
\end{align}

We also recall the notion of a ``conditional (process) matrix''~\cite{Wechs2019}, defined for a given matrix $W$ and a given CP map $M^k \coloneqq M_{a_k|x_k}^{A_{IO}^k}$ applied by a party $A_k$ as
\begin{align}
W_{|M^k} \coloneqq \Tr_k \left[M^k \otimes \id^{\N \backslash k} \, \cdot \, W\right]. \label{eq:conditional_W}
\end{align}
In general, even if $W$ is a valid process matrix, $W_{|M^k}$ may not be a valid process matrix (in which case we refer to it as a ``conditional matrix'').
It has been shown that a process matrix $W$ is compatible with party $A_k$ acting first (i.e., the correlations in Eq.~\eqref{eq:born_rule_Npartite} are such that no signalling from the other parties to $A_k$ is possible) if and only if for any CP map $M^k$ the conditional matrix $W_{|M^k}$, as defined in Eq.~\eqref{eq:conditional_W}, is (up to normalisation) a valid $(N{-}1)$-partite process matrix for the parties in $\N\setminus k$~\cite{Wechs2019}. 

\subsection{Multipartite causal (non)separability}

Any process matrix $W$ satisfying Eq.~\eqref{eq:valid_Npartite} above describes a valid scenario within the process matrix framework, and may or may not be compatible with a well-defined causal order. This distinction is formalised by the concept of \emph{causal (non)separability}. 

The following general definition of multipartite causal separability was proposed in Ref.~\cite{Wechs2019}, and shown to be equivalent to that of \emph{extensible causal separability} proposed in Ref.~\cite{oreshkov16}. It has a recursive form, which formalises the intuitive notion that a process matrix is compatible with a well-defined causal order if, in any run of the experiment, one can identify a party that acts first (which party this is
can be determined probabilistically), and the conditional process matrix for the remaining parties, which depends on the action of the
first party, should again be causally separable for any
operation performed by this intial party. This accounts for the fact that, in the general multipartite case, the causal order can be both probabilistic and dynamical. Furthermore, this recursive property is required to hold even in the case of extensions with arbitrary auxiliary states, as the process matrix framework explicitly allows for such extensions, and their availability should not lead to the ``activation'' of causal nonseparability~\cite{oreshkov16,Wechs2019}.

\begin{definition}[Multipartite causal separability] \label{def:our_def-CS}
Any single-partite process matrix is causally separable. 
For $N \ge 2$, an $N$-partite process matrix $W$ is said to be \emph{causally separable} if and only if, for any extension $A_{I'}^\N$ of the parties' incoming systems and any auxiliary quantum state $\rho \in A_{I'}^\N$, $W \otimes \rho$ can be decomposed as 
\begin{equation}
 W \otimes \rho = \sum_{k \in \N} q_k \, W_{(k)}^\rho , \label{eq:our_def-CS}
\end{equation}
with $q_k\ge 0$, $\sum_k q_k = 1$, and where for each $k$, $W_{(k)}^\rho \in A_{II'O}^\N$ is a process matrix compatible with party $A_k$ acting first, and is such that for any CP map $M^k \in A_{II'O}^k$ applied by party $A_k$, the conditional $(N{-}1)$-partite process matrix%
\footnote{Note that compared to Eq.~\eqref{eq:conditional_W}, we take here $A_{II'O}^k \coloneqq A_{IO}^k \otimes A_{I'}^k$, $M^k \coloneqq M_{a_k|x_k}^{A_{II'O}^k}$, $\Tr_k \coloneqq \Tr_{A_{II'O}^k}$ and $\id^{\N \backslash k} \coloneqq \bigotimes_{j \in \N \backslash k} \id^{A_{II'O}^j}$ in the definition of the conditional matrix.}
$(W_{(k)}^\rho)_{|M^k} \coloneqq \Tr_k [M^k \otimes \id^{\N \backslash k} \, \cdot \, W_{(k)}^\rho]$ is itself causally separable.
\end{definition}

We furthermore recall the following two propositions, proven in Ref.~\cite{Wechs2019} as Propositions~B1 and~B2, respectively, which we will use in the proof below.

\begin{proposition} \label{prop:factor_rho}
Without loss of generality, each $W_{(k)}^\rho$ in Definition~\eqref{def:our_def-CS} can be taken to be of the form $W_{(k)} \otimes \rho$. Eq.~\eqref{eq:our_def-CS} then implies the direct decomposition $W = \sum_{k \in \N} q_k \, W_{(k)}$, with each $W_{(k)} \in A_{IO}^\N$ being a process matrix compatible with party $A_k$ acting first (and such that for any CP map $M^k \in A_{II'O}^k$, $(W_{(k)}\otimes\rho)_{|M^k}$ is causally separable).
\end{proposition}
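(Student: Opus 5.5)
The plan is to start from an arbitrary decomposition $W\otimes\rho=\sum_k q_k W_{(k)}^\rho$ as in Definition~\ref{def:our_def-CS}, and to produce from it a new decomposition in which each term factorises as $W_{(k)}\otimes\rho$. The natural candidate is to ``trace out and replace'' the auxiliary systems. Concretely, I would set
\begin{equation*}
\tilde W_{(k)} \coloneqq \frac{1}{\Tr\rho}\,\Tr_{A_{I'}^\N} W_{(k)}^\rho ,
\end{equation*}
taking $\rho$ normalised so that $\Tr\rho=1$. Since $\Tr_{A_{I'}^\N}(W\otimes\rho)=W$, this immediately gives $W=\sum_k q_k\tilde W_{(k)}$. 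Tensoring back with $\rho$ then yields $W\otimes\rho=\sum_k q_k\,\tilde W_{(k)}\otimes\rho$, which is a decomposition of the required product form.

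It then remains to check that each $\tilde W_{(k)}\otimes\rho$ satisfies the conditions in Definition~\ref{def:our_def-CS}. I would verify these in turn.
\begin{itemize}
\item[(i)] \emph{Validity.} Positivity and the trace condition are preserved by partial trace. For the linear constraints $\L^\N$, the point is that tracing out an incoming space $A_{I'}^k$ commutes with the trace-and-replace operations on $A_O^j$ and $A_{IO}^j$. The valid-process constraints on $A_{II'O}^\N$ then restrict to those on $A_{IO}^\N$.
\item[(ii)] \emph{Compatibility with $A_k$ first.} No-signalling to $A_k$ is a statement about conditional matrices being valid for every CP map. For $\tilde W_{(k)}$ I would argue it directly: a CP map $M^k\in A_{IO}^k$ can be extended to $M^k\otimes\id^{A_{I'}^k}$, and the other parties' maps can be extended by tracing their $A_{I'}^j$. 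This gives $(\tilde W_{(k)})_{|M^k}=\Tr_{I'}[(W^\rho_{(k)})_{|M^k\otimes \id}]$, which is valid.
\item[(iii)] \emph{Causal separability of the conditional.} This is the real requirement: for any CP map $M^k\in A_{II'O}^k$, the matrix $(\tilde W_{(k)}\otimes\rho)_{|M^k}$ must be causally separable.
\end{itemize}

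Step (iii) is the main obstacle, because $\tilde W_{(k)}\otimes\rho$ is not obviously related to $W_{(k)}^\rho$: it discards the correlations between the auxiliary systems and the original ones. To get around this, I would exploit the freedom to choose the extension. Apply the hypothesis not to $\rho$ itself but to a purification-like larger auxiliary state, for instance $\rho\otimes\sigma$ on doubled spaces, or a state from which $\rho$ can be locally reproduced. Alternatively, note that $\tilde W_{(k)}\otimes\rho$ can be obtained from $W^{\rho\otimes\rho'}_{(k)}$ (with $\rho'$ a fresh copy on new systems) by having the parties ignore the original auxiliary copy and use the fresh one. ``Ignoring'' a system is implemented by composing local CP maps with a partial trace, which maps causally separable conditionals to causally separable ones.

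So the key lemma I would prove first, by induction on $N$, is a closure property: causal separability of an $N$-partite matrix is preserved when each party's operations are restricted to act trivially (trace) on part of its incoming auxiliary system. Equivalently, if $V$ on $A_{II'I''O}$ is causally separable, then so is $\Tr_{I''}V$. This follows from the recursive definition, since any extension or CP map for the smaller matrix lifts to one for the larger by tensoring with identity on $I''$. Combining this lemma with the symmetric choice of a doubled auxiliary state should give (iii) and complete the proof.
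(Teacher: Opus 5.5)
Your steps (i)--(ii) are fine, and your closure lemma is exactly Proposition~\ref{prop:traceout}. The gap is in step (iii). Your claim that $\tilde W_{(k)}\otimes\rho$ is obtained from $W^{\rho\otimes\rho'}_{(k)}$ by ``ignoring'' the first copy is false. Ignoring it produces $\Tr_{I'}W^{\rho\otimes\rho'}_{(k)}$, and only the weighted sum $\sum_k q_k \Tr_{I'}W^{\rho\otimes\rho'}_{(k)} = W\otimes\rho'$ is guaranteed to factorise. Nothing in Definition~\ref{def:our_def-CS} forces the individual terms to have the form $X_{(k)}\otimes\rho'$, and they genuinely need not when $\rho'$ is mixed: a classical flag in $\rho'$ can select between two different valid decompositions of $W\otimes\rho$. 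Doubling the auxiliary state therefore just reproduces the original problem one level up. Symmetrising under the local swaps $I'\leftrightarrow I''$ does not create product structure either. You also do not need your particular candidate $\Tr_{I'}W^\rho_{(k)}$: ``without loss of generality'' lets you discard the given decomposition and construct a new one.

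The missing idea is to use a \emph{pure} auxiliary state together with positivity of the terms. Suppose $\rho=\ketbra{\psi}{\psi}$ and $q_k>0$.
\begin{itemize}
\item From $0\le q_kW^\psi_{(k)}\le W\otimes\ketbra{\psi}{\psi}$ you get $\ker(W\otimes\ketbra{\psi}{\psi})\subseteq\ker W^\psi_{(k)}$.
\item Hence $\supp W^\psi_{(k)}\subseteq\cH^{A_{IO}^\N}\otimes\ket{\psi}$, so $W^\psi_{(k)}=W_{(k)}\otimes\ketbra{\psi}{\psi}$ with $W_{(k)}:=(\id\otimes\bra{\psi})W^\psi_{(k)}(\id\otimes\ket{\psi})$.
\end{itemize}
So the product form is automatic. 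This is the same support argument the paper uses for $\overline W_{[k_1],(k_2,\ldots,k_N)}$ in its main proof.

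For mixed $\rho$, purify it to $\ket{\psi}\in\cH^{A_{I'}^\N}\otimes\cH^{A_{I''}^\N}$, handing the purifying system to the parties as further incoming systems. This is allowed because the definition quantifies over all extensions. Apply the pure-state argument to $\ket{\psi}$ and trace out $A_{I''}^\N$. This gives $W\otimes\rho=\sum_k q_kW_{(k)}\otimes\rho$ and, for any $M^k\in A^k_{II'O}$,
\begin{equation*}
(W_{(k)}\otimes\rho)_{|M^k}=\Tr_{A_{I''}^{\N\setminus k}}\big[(W_{(k)}\otimes\ketbra{\psi}{\psi})_{|M^k\otimes\id^{A^k_{I''}}}\big],
\end{equation*}
which is causally separable by your closure lemma. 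Compatibility of $W_{(k)}$ with $A_k$ acting first follows in the same way. Your instinct towards a ``purification-like'' state and a trace-out lemma was right, but what makes the argument work is purity plus the support argument, not a doubled copy.
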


\begin{proposition} \label{prop:traceout}
In a scenario where the parties' incoming spaces are decomposed as $A_I^\N \otimes A_{I'}^\N$ (possibly with some trivial spaces $A_I^k$ or $A_{I'}^k$), if a process matrix $W \in A_{II'O}^\N$ is causally separable, then so is $\Tr_{I'} W \in A_{IO}^\N$ (with $\Tr_{I'} \coloneqq \Tr_{A_{I'}^\N}$).
\end{proposition}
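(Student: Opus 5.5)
The plan is to argue by induction on the number of parties $N$, directly from the recursive Definition~\ref{def:our_def-CS}. For $N=1$ there is nothing to prove. First we need that $\Tr_{I'}W$ is a valid process matrix whenever $W\in A_{II'O}^\N$ is. This follows from the generalised Born rule: composing $W$ with instruments of the form $M^{A_{IO}^k}\otimes\id^{A_{I'}^k}$, i.e.\ each party discards its $I'$ system, is the same as composing $\Tr_{I'}W$ with $M^{A_{IO}^k}$. Hence positivity, normalisation and membership in $\L^\N$ carry over.

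For the induction step, let $N\ge2$ and assume the statement for $N-1$ parties. Fix an arbitrary extension $A_{I''}^\N$ and a state $\rho\in A_{I''}^\N$. Then $(\Tr_{I'}W)\otimes\rho=\Tr_{I'}(W\otimes\rho)$. Now view $A_{I'}^\N\otimes A_{I''}^\N$ as the extension used in Definition~\ref{def:our_def-CS} for $W$, i.e.\ the extension of the incoming spaces $A_I^\N$ by $A_{I'}^\N\otimes A_{I''}^\N$. Applied to $W$ with the auxiliary state $\rho$ on $A_{I''}^\N$, this gives
\begin{equation*}
W\otimes\rho=\sum_k q_k\,W^\rho_{(k)}, \quad\text{and hence}\quad (\Tr_{I'}W)\otimes\rho=\sum_k q_k\,\Tr_{I'}W^\rho_{(k)}.
\end{equation*}
It remains to check that each $\Tr_{I'}W^\rho_{(k)}$ satisfies the two requirements of the definition.

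First, $\Tr_{I'}W^\rho_{(k)}$ is compatible with $A_k$ acting first. By the same discarding argument as above, no-signalling to $A_k$ in the correlations obtained from $\Tr_{I'}W^\rho_{(k)}$ reduces to no-signalling for $W^\rho_{(k)}$ with the restricted instruments $M\otimes\id^{I'}$. Second, take any CP map $M^k\in A_{II''O}^k$. Tracing out $A_{I'}^k$ amounts to applying $M^k\otimes\id^{A_{I'}^k}$, so
\begin{equation*}
(\Tr_{I'}W^\rho_{(k)})_{|M^k}=\Tr_{A_{I'}^{\N\backslash k}}\big[(W^\rho_{(k)})_{|M^k\otimes\id^{A_{I'}^k}}\big].
\end{equation*}
Since $M^k\otimes\id^{A_{I'}^k}$ is a legitimate CP map on $A_k$'s full space $A_{II'I''O}^k$, the conditional matrix inside the bracket is causally separable by the definition applied to $W$. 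It is an $(N{-}1)$-partite process matrix whose incoming spaces decompose as $(A_IA_{I''})\otimes A_{I'}$, so the induction hypothesis shows that its partial trace over $A_{I'}^{\N\backslash k}$ is causally separable. This completes the induction.

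The main obstacle is bookkeeping rather than an analytical difficulty. One must make sure the definition is applied to $W$ with the right extension, namely $I''$ on top of the already present $I'$. One must also check that the identities $\Tr_{I'}$ versus conditioning with $M\otimes\id$, and the preservation of validity and of ``$A_k$ first'' under discarding incoming subsystems, hold. All of these reduce to the observation that tracing an incoming subsystem is equivalent to the party applying an instrument that ignores it.
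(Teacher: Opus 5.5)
Your proof is correct and follows essentially the same inductive route as the proof of Proposition~B2 in Ref.~\cite{Wechs2019}, which this paper cites rather than reproduces. That route is: apply Definition~\ref{def:our_def-CS} to $W$ with the extra extension $A_{I''}^\N$, push $\Tr_{I'}$ through the decomposition, and use $(\Tr_{I'}W^\rho_{(k)})_{|M^k}=\Tr_{A_{I'}^{\N\backslash k}}\big[(W^\rho_{(k)})_{|M^k\otimes\id^{A_{I'}^k}}\big]$ together with the induction hypothesis. One wording point: since $W$ already lives on $A_{II'O}^\N$, the extension in the definition is just $A_{I''}^\N$ added to the incoming spaces $A_{II'}^\N$, not $A_{I'}^\N\otimes A_{I''}^\N$ added to $A_I^\N$; this is in fact how you then use it.
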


\section{Characterisation of multipartite causal separability}

In Proposition~5 of Ref.~\cite{Wechs2019}, a sufficient condition for multipartite causal separability was provided. We prove in the following that this condition is also necessary, i.e., we establish the following theorem.

\begin{theorem}[Characterisation of multipartite causal separability]
\label{thm:charact}
An $N$-partite process matrix $W\in A_{IO}^\Nset$ is causally separable if and only if it can be decomposed as a sum of $N!$ positive semidefinite operators $W_{(k_1,\ldots,k_N)}\ge0$ in the form
\begin{align}
 W = \sum_{(k_1,\ldots,k_N)} W_{(k_1,\ldots,k_N)}, \label{eq:causal_decomp}
\end{align}
such that for any ordered subset of parties $(k_1, \ldots, k_n)$ of 
$\Nset$ (with $1 \le n \le N$),
the partial sum
\begin{align}
 W_{(k_1,\ldots,k_n)} := \sum_{(k_{n+1},\ldots,k_N)} W_{(k_1,\ldots,k_n,k_{n+1},\ldots,k_N)} \label{eq:partial_sums}
\end{align}
satisfies
\begin{align}
 {}_{[1-A_O^{k_n}]}\Tr_{A_{IO}^{\Nset\setminus\{k_1,\ldots,k_n\}}} W_{(k_1,\ldots,k_n)}=0. \label{eq:prefix_cstrs}
\end{align}
\end{theorem}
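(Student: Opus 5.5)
Sufficiency (``if'') is Proposition~5 of Ref.~\cite{Wechs2019}, so we only prove necessity. We argue by induction on $N$; the cases $N=1,2$ are immediate (for $N=2$ the statement is the known bipartite characterisation). Assume the theorem holds for $N-1$ parties, and let $W$ be a causally separable $N$-partite process matrix.

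\textbf{Step 1: choose an auxiliary state.} For each party $A_k$, take an extension $A_{I'}^k$ that holds, for every other party $j\neq k$, a copy $A_{I'}^{k\to j}$ of $A_{O}^{k}\otimes A_I^{k}$, or whatever systems are needed. Take $\rho$ to be a product of maximally entangled states $\ket{\Phi^+}$ shared between $A_k$ and each $A_j$. Applying Definition~\ref{def:our_def-CS} together with Proposition~\ref{prop:factor_rho}, we obtain
\begin{equation*}
W=\sum_k q_k W_{(k)},
\end{equation*}
where each $W_{(k)}$ is compatible with $A_k$ first, i.e.\ ${}_{[1-A_O^k]}\Tr_{\N\setminus k}W_{(k)}=0$. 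This gives the $n=1$ constraint for $W_{(k_1)}:=q_{k_1}W_{(k_1)}$. Moreover, $(W_{(k)}\otimes\rho)_{|M^k}$ is causally separable for every CP map $M^k$ on $A_{II'O}^k$.

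\textbf{Step 2: coherent teleportation.} This is the key step and the main obstacle. We need a single CP map $M^k$ (not an instrument with several outcomes) such that the conditional $(N{-}1)$-partite matrix $(W_{(k)}\otimes\rho)_{|M^k}$ faithfully encodes all of $W_{(k)}$, and such that a decomposition of the conditional matrix can be pulled back to a decomposition of $W_{(k)}$ itself. Teleporting $A_I^kA_O^k$ to a single party $j$, as in Ref.~\cite{Wechs2019}, forces a fixed choice of $j$ and so loses the freedom to choose the second party. Instead, take $M^k$ to be the projector (a Bell-type measurement with a single postselected outcome) onto a \emph{superposition over $j$} of maximally entangled states linking $A_I^kA_O^k$ with the share $A_{I'}^{k\to j}$, tensored with the vacuum-like state $\ket{0}$ on the other shares. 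The result is an $(N{-}1)$-partite matrix $\widetilde W$ in which party $j$'s incoming space contains the systems of $A_k$ ``coherently over $j$''. One then checks that $\widetilde W$ is a valid process matrix, and that tracing out the extra incoming systems and projecting the flag recovers $W_{(k)}$ restricted appropriately, with the interference terms being harmless. Getting these orthogonality/flag conditions right is where I expect most of the work.

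\textbf{Step 3: induction and pull-back.} By the induction hypothesis, $\widetilde W$ admits a decomposition $\sum_{(k_2,\dots,k_N)}\widetilde W_{(k_2,\dots,k_N)}$ satisfying the prefix constraints for the parties in $\N\setminus k$. Define
\begin{equation*}
W_{(k,k_2,\dots,k_N)}:=q_k\,\Phi(\widetilde W_{(k_2,\dots,k_N)}),
\end{equation*}
where $\Phi$ is the CP linear map inverting the teleportation: it contracts with $\ket{\Phi^+}$'s to move the systems back to $A_{IO}^k$, discards auxiliaries and flags, and is rescaled by the teleportation probability. Since $\Phi$ is CP, positivity is preserved, and summing over all orders gives $q_kW_{(k)}$ thanks to the faithfulness in Step~2. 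Because $\Phi$ only acts on systems that are traced out or left untouched in the constraints, each prefix constraint ${}_{[1-A_O^{k_n}]}\Tr_{\N\setminus\{k_1..k_n\}}$ for $\widetilde W$ (with $n\ge2$) transfers to the corresponding constraint for $W$; this commutation is to be verified using Proposition~\ref{prop:traceout} and linearity. The $n=1$ constraint holds by Step~1. Summing over $k$ then yields Eq.~\eqref{eq:causal_decomp} with all the constraints of Eq.~\eqref{eq:prefix_cstrs} satisfied.
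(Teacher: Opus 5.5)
Your architecture matches the paper's: split $W$ by first party, teleport $A_{IO}^{k}$ coherently into a superposition of receivers, apply the induction hypothesis, then pull back. However, the two steps that carry the real content are not established.

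\textbf{Step 2 gives no distinguishable branches.} Take the construction you actually specify: $\rho$ a product of $\ket{\Phi^+}^{X_j/Y_j}$ on copies $X_j\simeq Y_j\simeq A_{IO}^k$, and $M^k$ projecting onto $\sum_j \ket{\Phi^+}^{A_{IO}^k/X_j}\otimes\bigotimes_{j'\neq j}\ket{0}^{X_{j'}}$. The conditional matrix is then proportional to $J'W_{(k)}J'^\dagger$, tensored with a leftover state. Here $J'=\sum_j J'_j$ and $J'_j=\id^{A_{IO}^k\to Y_j}\otimes\bigotimes_{j'\neq j}\ket{0}^{Y_{j'}}$. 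Since $J'^\dagger_{j'}J'_j=\ketbra{0}{0}^{A_{IO}^k}\neq 0$ for $j\neq j'$, the branches overlap and $J'$ is not an isometry. There is also no flag register for your ``projecting the flag'' to act on.

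The paper puts the flag into the resource state. $\ket{\Omega_{[k_1]}}$ superposes, over $k_2$, the state $\ket{\Phi^+}^{S^{k_1}_{[k_1]}/S^{k_2}_{[k_1]}}$ with $\ket{0}$ on the other $S^{\ell}_{[k_1]}$ and $\ket{k_2}$ on every $F^{\ell}_{[k_1]}$. $M^{k_1}$ projects $A_{IO}^{k_1}S^{k_1}_{[k_1]}$ onto $\ket{\Phi^+}$ and $F^{k_1}_{[k_1]}$ onto $\sum_{k_2}\ket{k_2}$. This yields exactly a multiple of $\widetilde J_{k_1}W\widetilde J_{k_1}^\dagger\otimes\Omega_{\setminus k_1}$. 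Here $\widetilde J_{k_1}$ is an isometry whose branches $J_{k_1,k_2}$ are orthogonal and labelled by $\ket{k_2}^{F^{k_2}_{[k_1]}}$, a system the receiver keeps. Proposition~\ref{prop:traceout} is then used to drop $\Omega_{\setminus k_1}$ before invoking the induction hypothesis.

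\textbf{Step 3's transfer of the constraints for $n\ge 2$ is unjustified.} It does not follow from ``$\Phi$ only acts on systems that are traced out or left untouched'', and Proposition~\ref{prop:traceout} plays no role there. Your pull-back contracts, rather than traces, the auxiliary systems of the parties in $\N\setminus\{k_1,\ldots,k_n\}$, and it moves the receiver's kept system back into $A_{IO}^{k_1}$. The inductive constraint only controls $\widetilde W_{(k_2,\ldots,k_n)}$ after those auxiliaries are traced out. In particular, it says nothing about components of $\widetilde W_{(k_2,\ldots,k_n)}$ outside the range of the embedding. Two ingredients are missing; in the paper's notation your $\widetilde W$ is $\overline W_{[k_1]}=\widetilde J_{k_1}W_{[k_1]}\widetilde J_{k_1}^\dagger$.
\begin{enumerate}
\item Since $0\le\overline W_{[k_1],(k_2,\ldots,k_N)}\le\overline W_{[k_1]}$, every term, and hence every partial sum, is supported in $\range\widetilde J_{k_1}$. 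Therefore $\overline W_{[k_1],(k_2,\ldots,k_n)}=\widetilde J_{k_1}W_{(k_1,\ldots,k_n)}\widetilde J_{k_1}^\dagger$ exactly, with $W_{(k_1,\ldots,k_N)}:=\widetilde J_{k_1}^\dagger\overline W_{[k_1],(k_2,\ldots,k_N)}\widetilde J_{k_1}$.
\item Sandwich the inductive constraint by $\ketbra{k_2}{k_2}^{F^{k_2}_{[k_1]}}$, which commutes with ${}_{[1-A_O^{k_n}]}$ and with the partial trace. This replaces $\widetilde J_{k_1}$ by $\widetilde J_{k_1,k_2}/\sqrt{N-1}$. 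In that single branch the traced auxiliaries are in a fixed product state, and the teleported copy of $A_{IO}^{k_1}$ sits in the kept system $S^{k_2}_{[k_1]}$. So ${}_{[1-A_O^{k_n}]}\Tr_{A_{IO}^{\N\setminus\{k_1,\ldots,k_n\}}}W_{(k_1,\ldots,k_n)}=0$ can be read off directly.
\end{enumerate}
A flag-free variant like yours might be salvaged by a component-by-component analysis exploiting the $\ket{0}$'s, but nothing of that kind appears in your proposal.
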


Note that our notation here differs slightly from that of Ref.~\cite{Wechs2019}, in that we label the terms in the decomposition directly by the ordered subsets $(k_1, \ldots, k_n)$, and we merely trace out and do not replace by normalised identity operators the systems $A_{IO}^{\Nset\setminus\{k_1,\ldots,k_n\}}$ in Eq.~\eqref{eq:prefix_cstrs}. Here, as throughout this paper, the notation $(k_1, \ldots, k_n)$ implicitly assumes that all the $k_i$ are distinct; thus, the sum that appears in Eq.~\eqref{eq:causal_decomp} is over all permutations of $\Nset$, while the sum in Eq.~\eqref{eq:partial_sums} is over all of those which start with $(k_1, \ldots, k_n)$.

The ``if'' direction of Theorem~\ref{thm:charact} was proven in Ref.~\cite{Wechs2019}; here we will prove the ``only if'' part by induction. We note already at this point that the above characterisation trivially holds for $N=1$.

\subsection{Coherent teleportation technique}

The characterisation of Theorem~\ref{thm:charact} was shown to hold%
\footnote{Note that, for $N=2$, it can be shown rather trivially to hold, and corresponds directly to the bipartite characterisation of causal separability given already in Ref.~\cite{oreshkov12}.} 
for $N=3$ in Ref.~\cite{Wechs2019} using a ``teleportation technique'', wherein each pair of parties shares a maximally entangled state as part of their auxiliary systems, and the party $A_k$ acting first effectively teleports their systems $A^k_{IO}$ to a single other party (cf.\ Lemma~B1 in Ref.~\cite{Wechs2019}).
However, this technique was not sufficient to prove the characterisation beyond $N=3$, and indeed this approach was behind the necessary condition proved in Ref.~\cite{Wechs2019}, wherein separate teleportations to each other party were considered.
Linking these teleportations, obtained through different CP maps performed by the first party $A_k$, was the principal barrier to generalising this approach.

Here, we overcome this by developing a more subtle ``coherent teleportation'' technique. This consists in choosing a specific auxiliary input state, as well as a specific CP map applied by one party, such that the resulting conditional matrix is related to the original process matrix by an isometry. This isometry effectively ``teleports'' the system of the party under consideration to all other parties in a coherent manner, while attaching a ``flag'' system that indicates to which party the systems are teleported.

\medskip
\paragraph{Auxiliary state.}
For each $k_1\in\Nset$, we provide all parties $k$ with an auxiliary input system $S_{[k_1]}^k \simeq A_{IO}^{k_1}$ -- used for the teleportation of the $S$ystems $A_{IO}^{k_1}$ -- and another auxiliary input system $F_{[k_1]}^k$ of dimension $N-1$ ($\cH^{F_{[k_1]}^k} = \Span\{\ket{\ell}:\ell\in\Nset\setminus k_1\}$) -- encoding a ``$F$lag'' indicating to whom the systems will be teleported.

We then define
\begin{widetext}
\begin{align}
  \ket{\Omega_{[k_1]}} & := \frac{1}{\sqrt{N-1}} \sum_{k_2\in\Nset\setminus k_1} \!\!\ket{\Phi^+}^{S_{[k_1]}^{k_1}/S_{[k_1]}^{k_2}} \otimes \!\bigotimes_{\ell\in\Nset\setminus \{k_1,k_2\}}\!\! \ket{0}^{S_{[k_1]}^{\ell}} \otimes \bigotimes_{k\in\Nset} \ket{k_2}^{F_{[k_1]}^{k}} \label{eq:def_ket_Omega_k1}
  \quad \in \cH^{S_{[k_1]}^{\Nset}F_{[k_1]}^{\Nset}} := \bigotimes_{k\in\Nset} \cH^{S_{[k_1]}^{k}F_{[k_1]}^{k}}
\end{align}
(where $\ket{0}^{S_{[k_1]}^{\ell}}$ is any fixed state in $\cH^{S_{[k_1]}^{\ell}}$) and
\begin{align}
  \Omega & := \bigotimes_{k_1\in\Nset} \ketbra{\Omega_{[k_1]}}{\Omega_{[k_1]}} \quad \in \bigotimes_{k_1\in\Nset} S_{[k_1]}^{\Nset}F_{[k_1]}^{\Nset} = \bigotimes_{k_1,k\in\Nset} S_{[k_1]}^{k}F_{[k_1]}^{k}. \label{eq:def_Omega}
\end{align}
\end{widetext}

Note that the auxiliary system given to each party $k$ is then $A^k_{I'} := \bigotimes_{k_1\in\Nset} S_{[k_1]}^{k}F_{[k_1]}^{k}$.

\medskip
\paragraph{CP map.}

For a given party $k_1$, we define
\begin{align}
  \ket{\mu^{k_1}} & := \ket{\Phi^+}^{A_{IO}^{k_1}/S_{[k_1]}^{k_1}} \otimes \!\sum_{k_2\in\Nset\setminus k_1} \!\!\!\ket{k_2}^{F_{[k_1]}^{k_1}} \ \in \cH^{A_{IO}^{k_1}S_{[k_1]}^{k_1}F_{[k_1]}^{k_1}} \label{eq:def_ket_mu_k1}
\end{align}
and
\begin{align}
  M^{k_1} & := c_{k_1} \, \ketbra{\mu^{k_1}}{\mu^{k_1}} \otimes \bigotimes_{k'\in\Nset\setminus k_1} \id^{S_{[k']}^{k_1}F_{[k']}^{k_1}} \quad \in A_{II'O}^{k_1} \label{eq:def_Mk1}
\end{align}
with $c_{k_1} > 0$ such that $M^{k_1}$ defines the Choi matrix of a trace-nonincreasing CP map.%
\footnote{Such a value $c_{k_1} > 0$ always exists in finite dimensions (namely, one can take $c_{k_1} = d_{A_O^k}/(N-1)$). However, its precise value is irrelevant for our purposes.}

\begin{lemma}[Coherent teleportation technique]
\label{lem:coherent_teleportation}
Consider a process matrix $W \in A_{IO}^{\Nset}$ (with $N\ge 2$) to which one attaches the state $\Omega$ from Eq.~\eqref{eq:def_Omega}, and on which party $k_1$ applies the CP map $M^{k_1}$ from Eq.~\eqref{eq:def_Mk1} above. The resulting conditional matrix for the other $N-1$ parties is then
\begin{align}
  (W\otimes \Omega)_{|M^{k_1}} = \frac{c_{k_1}}{(d_{A_{IO}^k})^2} \, \widetilde J_{k_1} W \widetilde J_{k_1}^\dagger \otimes \Omega_{\setminus k_1} \label{eq:coherent_teleportation}
\end{align}
with
\begin{align}
  \widetilde J_{k_1} & := J_{k_1} \otimes \id^{A_{IO}^{\Nset\setminus k_1}}, \notag \\
  J_{k_1} & := \frac{1}{\sqrt{N-1}} \sum_{k_2\in\Nset\setminus k_1} J_{k_1,k_2}, \notag \\
  J_{k_1,k_2} & := \id^{A_{IO}^{k_1}\to S_{[k_1]}^{k_2}} \otimes \!\bigotimes_{\ell\in\Nset\setminus \{k_1,k_2\}} \!\!\!\ket{0}^{S_{[k_1]}^{\ell}} \otimes \!\bigotimes_{\ell'\in\Nset\setminus k_1} \!\!\!\ket{k_2}^{F_{[k_1]}^{\ell'}}\!,
\end{align}
and $\Omega_{\setminus k_1} := \bigotimes_{k'\in\Nset\setminus k_1} \Tr_{S_{[k']}^{k_1}F_{[k']}^{k_1}} \ketbra{\Omega_{[k']}}{\Omega_{[k']}}$.
\end{lemma}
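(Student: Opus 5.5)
The plan is a direct computation that splits the problem along the tensor structure of $\Omega$ and $M^{k_1}$. By Eq.~\eqref{eq:def_Omega}, $\Omega$ is a product over $k'\in\Nset$ of the pure states $\ketbra{\Omega_{[k']}}{\Omega_{[k']}}$. By Eq.~\eqref{eq:def_Mk1}, $M^{k_1}$ acts nontrivially only on $A_{IO}^{k_1}S_{[k_1]}^{k_1}F_{[k_1]}^{k_1}$, and as the identity on the other auxiliary systems $S_{[k']}^{k_1}F_{[k']}^{k_1}$ ($k'\neq k_1$) of party $k_1$. The partial trace $\Tr_{k_1}$ therefore factorises, and the factors $k'\neq k_1$ simply give $\Tr_{S_{[k']}^{k_1}F_{[k']}^{k_1}}\ketbra{\Omega_{[k']}}{\Omega_{[k']}}$, i.e.\ $\Omega_{\setminus k_1}$. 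What remains is to show
\begin{align}
\Tr_{A_{IO}^{k_1}S_{[k_1]}^{k_1}F_{[k_1]}^{k_1}}\!\big[(\ketbra{\mu^{k_1}}{\mu^{k_1}}\otimes\id)(W\otimes\ketbra{\Omega_{[k_1]}}{\Omega_{[k_1]}})\big] \propto \widetilde J_{k_1}W\widetilde J_{k_1}^\dagger .
\end{align}

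Because $\ketbra{\mu^{k_1}}{\mu^{k_1}}$ is rank one, I will write the left-hand side as $K\,(W\otimes\id)\,K^\dagger$, with $K:=(\bra{\mu^{k_1}}\otimes\id)(\id\otimes\ket{\Omega_{[k_1]}})$, using cyclicity of the partial trace over the $k_1$ systems. This $K$ is an operator from $\cH^{A_{IO}^{\Nset}}$ to the remaining systems. We have $\ket{\Omega_{[k_1]}}=\frac{1}{\sqrt{N-1}}\sum_{k_2}(\cdots)$, and the flag part of $\bra{\mu^{k_1}}$ is $\sum_{k_2'}\bra{k_2'}^{F^{k_1}_{[k_1]}}$. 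Contracting it against $\ket{k_2}^{F^{k_1}_{[k_1]}}$ gives $1$ for every $k_2$, so each branch survives with equal weight and the sum stays coherent. The other flag copies $\ket{k_2}^{F^{\ell'}_{[k_1]}}$ for $\ell'\neq k_1$ remain, matching $J_{k_1,k_2}$.

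The core step is the teleportation identity
\begin{align}
(\bra{\Phi^+}^{A_{IO}^{k_1}/S^{k_1}_{[k_1]}}\otimes\id^{S^{k_2}_{[k_1]}})(\id^{A_{IO}^{k_1}}\otimes\ket{\Phi^+}^{S^{k_1}_{[k_1]}/S^{k_2}_{[k_1]}}) = \frac{1}{d_{A_{IO}^{k_1}}}\,\id^{A_{IO}^{k_1}\to S^{k_2}_{[k_1]}},
\end{align}
which follows from the standard swap of two maximally entangled states using the basis expansions. The states $\ket{0}^{S^{\ell}_{[k_1]}}$ pass through untouched. Hence $K=\frac{1}{d_{A_{IO}^{k_1}}}\widetilde J_{k_1}$, and the squared prefactor together with $c_{k_1}$ gives the constant in Eq.~\eqref{eq:coherent_teleportation}, where the $d_{A_{IO}^k}$ there means $d_{A_{IO}^{k_1}}$.

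The main obstacle is bookkeeping rather than any conceptual difficulty. Two points need care: the Choi convention (possible transposes; these disappear because $\Phi^+$ is built on matched computational bases and $\id^{X\to X'}$ is defined with the same bases), and checking that the factorisation of the partial trace over all of party $k_1$'s auxiliary subsystems is legitimate. I would verify the latter first by writing everything in product form before contracting.
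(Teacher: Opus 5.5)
Your proposal is correct and follows essentially the same route as the paper. You factor off $\Omega_{\setminus k_1}$ because $M^{k_1}$ acts trivially on $S_{[k']}^{k_1}F_{[k']}^{k_1}$ for $k'\neq k_1$, then contract $\bra{\mu^{k_1}}$ with $\ket{\Omega_{[k_1]}}$ (flag sum equal to $1$ in each branch, plus the teleportation identity) to get $\frac{1}{d_{A_{IO}^{k_1}}}\widetilde J_{k_1}$. The paper does the contraction on vectors $\ket{w}$ and then sums over a decomposition $W=\sum_{w,w'}\ketbra{w}{w'}$, which amounts to your operator $K$; your reading of $d_{A_{IO}^k}$ as $d_{A_{IO}^{k_1}}$ agrees with the paper's computation.
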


The proof of this lemma is obtained through a series of algebraic manipulations, which we relegate to the \hyperref[app:proof_coherent_telep_lemma]{Appendix}. Notice already that $\widetilde J_{k_1}$, $J_{k_1}$ and $J_{k_1,k_2}$ are all isometries.

Using Definition~\ref{def:our_def-CS} (multipartite causal separability) together with Propositions~\ref{prop:factor_rho} and~\ref{prop:traceout} recalled above, the following then immediately follows:

\begin{corollary}[Necessary condition from coherent teleportation]
\label{cor:coherent_teleportation_NC}
A causally separable process matrix $W \in A_{IO}^{\Nset}$ (with $N\ge 2$) necessarily admits a decomposition 
\begin{align}
  W = \sum_{k_1\in\Nset} W_{[k_1]},
\end{align}
where each term $W_{[k_1]} \ge 0$ is a (subnormalised) process matrix compatible with party $k_1$ acting first, and such that $\overline{W}_{[k_1]} := \widetilde J_{k_1} W_{[k_1]} \widetilde J_{k_1}^\dagger \in A_{II'O}^{\Nset\setminus k_1}$ is a causally separable process matrix for the parties in $\Nset\setminus k_1$ (with now input systems $A_{II'}^{\ell} := A_I^{\ell} S_{[k_1]}^{\ell} F_{[k_1]}^{\ell}$ for all $\ell\in\Nset\setminus k_1$).
\end{corollary}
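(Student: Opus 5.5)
The plan is to apply Definition~\ref{def:our_def-CS} to $W$ with the specific extension $A_{I'}^{\Nset}$ and auxiliary state $\rho=\Omega$ of Eq.~\eqref{eq:def_Omega}. This gives $W\otimes\Omega=\sum_{k_1} q_{k_1} W_{(k_1)}^\Omega$, where each $W_{(k_1)}^\Omega$ is compatible with $k_1$ acting first and has causally separable conditional matrices. By Proposition~\ref{prop:factor_rho}, we may take $W_{(k_1)}^\Omega = W_{(k_1)}\otimes\Omega$. This yields $W=\sum_{k_1} q_{k_1}W_{(k_1)}$, with each $W_{(k_1)}\in A_{IO}^{\Nset}$ a process matrix compatible with $k_1$ first, and $(W_{(k_1)}\otimes\Omega)_{|M^{k_1}}$ causally separable for every CP map $M^{k_1}\in A_{II'O}^{k_1}$. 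We then set $W_{[k_1]}:=q_{k_1}W_{(k_1)}$. Positivity and compatibility with $k_1$ first are inherited, since these properties are invariant under positive rescaling. Terms with $q_{k_1}=0$ are simply zero; the zero operator is trivially causally separable, since it decomposes as $0=\sum_k 0$.

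Next, for each $k_1$ we specialise the CP map to the $M^{k_1}$ of Eq.~\eqref{eq:def_Mk1}. Lemma~\ref{lem:coherent_teleportation} applies to any matrix in $A_{IO}^{\Nset}$, since it is a purely linear-algebraic identity and uses only the form of $\Omega$ and $M^{k_1}$. Applying it to $W_{(k_1)}$ gives
\begin{align}
(W_{(k_1)}\otimes\Omega)_{|M^{k_1}} = \frac{c_{k_1}}{(d_{A_{IO}^{k_1}})^2}\,\widetilde J_{k_1}W_{(k_1)}\widetilde J_{k_1}^\dagger\otimes\Omega_{\setminus k_1}. \notag
\end{align}
The left-hand side is causally separable for the parties in $\Nset\setminus k_1$, with incoming spaces $A_I^\ell\otimes A_{I'}^\ell$.

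We then split each party's incoming space. The systems $A_I^\ell S_{[k_1]}^\ell F_{[k_1]}^\ell$ are kept, and the remaining systems $S_{[k']}^\ell F_{[k']}^\ell$ for $k'\neq k_1$ are discarded. The latter are precisely where $\Omega_{\setminus k_1}$ lives; note that $\widetilde J_{k_1}W\widetilde J_{k_1}^\dagger$ lives in $A_{IO}^{\Nset\setminus k_1}\otimes S_{[k_1]}^{\Nset\setminus k_1}F_{[k_1]}^{\Nset\setminus k_1}$. Proposition~\ref{prop:traceout} then says that tracing out these discarded systems preserves causal separability. Since $\Omega_{\setminus k_1}$ is a normalised state, the partial trace just removes it. We conclude that $\widetilde J_{k_1}W_{(k_1)}\widetilde J_{k_1}^\dagger$ is causally separable, as it equals the traced-out matrix multiplied by the positive constant $(d_{A_{IO}^{k_1}})^2/c_{k_1}$.

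Causal separability is invariant under positive rescaling: the defining decomposition in Definition~\ref{def:our_def-CS} is linear and can be rescaled recursively, taking the normalisation in Eq.~\eqref{eq:valid_Npartite} in the subnormalised sense the corollary intends. Hence $\overline W_{[k_1]}=q_{k_1}\widetilde J_{k_1}W_{(k_1)}\widetilde J_{k_1}^\dagger$ is causally separable. It is a valid process matrix up to normalisation because it arises as a conditional matrix of a process compatible with $k_1$ first, which is then traced.

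The only delicate points are bookkeeping rather than real obstacles. The first is checking that the system partition matches the one required in Proposition~\ref{prop:traceout}. The second is the normalisation conventions, i.e.\ that ``causally separable'' is understood up to positive scaling for subnormalised matrices. All the substantive work has already been absorbed into Lemma~\ref{lem:coherent_teleportation}.
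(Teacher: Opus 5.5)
Your proposal is correct and follows the same route the paper uses: apply Definition~\ref{def:our_def-CS} with $\rho=\Omega$, factor via Proposition~\ref{prop:factor_rho}, evaluate the conditional matrix for $M^{k_1}$ with Lemma~\ref{lem:coherent_teleportation}, remove $\Omega_{\setminus k_1}$ with Proposition~\ref{prop:traceout}, and absorb $q_{k_1}$ into $W_{[k_1]}$. The paper only states that the corollary ``immediately follows'' from these ingredients, and your bookkeeping supplies the details it leaves implicit: the normalisation of $\Omega_{\setminus k_1}$, the rescaling, the $q_{k_1}=0$ case, and the system split needed for Proposition~\ref{prop:traceout}.
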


\subsection{Recursive proof of the ``only if'' part in Theorem~\ref{thm:charact}}

Having introduced the coherent teleportation technique and its consequence in the form of Corollary~\ref{cor:coherent_teleportation_NC}, we can now present the core of the proof for the ``only if'' part in Theorem~\ref{thm:charact}.

\begin{proof}

Consider a causally separable process matrix $W \in A_{IO}^{\Nset}$, with $N\ge 2$.

Assume that the characterisation of Theorem~\ref{thm:charact} holds true for ($N-1$)-partite causally separable process matrices. Each term $W_{[k_1]}$ that appears in the decomposition of Corollary~\ref{cor:coherent_teleportation_NC} is then such that $\overline{W}_{[k_1]} := \widetilde J_{k_1} W_{[k_1]} \widetilde J_{k_1}^\dagger$ admits a decomposition as a sum of $(N-1)!$ positive semidefinite operators $\overline{W}_{[k_1],(k_2,\ldots,k_N)}\ge0$ in the form
\begin{align}
 \overline{W}_{[k_1]} = \sum_{(k_2,\ldots,k_N)} \overline{W}_{[k_1],(k_2,\ldots,k_N)},
\end{align}
such that for any ordered subset of parties $(k_2, \ldots, k_n)$ of 
$\Nset\setminus k_1$ (with $2 \le n \le N$), the partial sum
\begin{align}
 \overline{W}_{[k_1],(k_2,\ldots,k_n)} := \sum_{(k_{n+1},\ldots,k_N)} \overline{W}_{[k_1],(k_2,\ldots,k_N)},
\end{align}
satisfies
\begin{align}
 {}_{[1-A_O^{k_n}]}\Tr_{A_{II'O}^{\Nset\setminus\{k_1,\ldots,k_n\}}} \overline{W}_{[k_1],(k_2,\ldots,k_n)}=0. \label{eq:prefix_cstrs_barW}
\end{align}

Define then
\begin{align}
 W_{(k_1,k_2,\ldots,k_N)} := \widetilde J_{k_1}^\dagger \overline{W}_{[k_1],(k_2,\ldots,k_N)} \widetilde J_{k_1} \quad \in A_{IO}^{\Nset},
\end{align}
which are such that $W_{(k_1,k_2,\ldots,k_N)} \ge 0$ and
\begin{align}
 \sum_{(k_1,k_2,\ldots,k_N)} \!\!\!\! & W_{(k_1,k_2,\ldots,k_N)} = \sum_{(k_1,k_2,\ldots,k_N)} \!\!\!\!\widetilde J_{k_1}^\dagger \overline{W}_{[k_1],(k_2,\ldots,k_N)} \widetilde J_{k_1} \notag \\
 & = \ \sum_{k_1} \widetilde J_{k_1}^\dagger \Big(\sum_{(k_2,\ldots,k_N)} \overline{W}_{[k_1],(k_2,\ldots,k_N)}\Big) \widetilde J_{k_1} \notag \\
 & = \ \sum_{k_1} \widetilde J_{k_1}^\dagger \overline{W}_{[k_1]} \widetilde J_{k_1} \ = \ \sum_{k_1} W_{[k_1]} \ = \ W
\end{align}
(where we used the fact that the $\widetilde J_{k_1}$ are isometries).

Defining the partial sums $W_{(k_1,\ldots,k_n)}$ as in Eq.~\eqref{eq:partial_sums}, first note that $W_{(k_1)} = \sum_{(k_2,\ldots,k_N)}W_{(k_1,k_2,\ldots,k_N)} = \widetilde J_{k_1}^\dagger \big(\sum_{(k_2,\ldots,k_N)} \overline{W}_{[k_1],(k_2,\ldots,k_N)}\big) \widetilde J_{k_1} = \widetilde J_{k_1}^\dagger \overline{W}_{[k_1]} \widetilde J_{k_1} = W_{[k_1]}$, and that the fact that $W_{[k_1]}$ is a (subnormalised) process matrix compatible with party $k_1$ acting first implies in particular that ${}_{[1-A_O^{k_1}]}\Tr_{A_{IO}^{\Nset\setminus k_1}} W_{(k_1)}=0$ (see e.g. Eq.~(A13) of~\cite{Wechs2019}) -- so that the constraint of Eq.~\eqref{eq:prefix_cstrs} is satisfied for $n=1$.

\bigskip

Before verifying the same constraint for $n>1$, first note that the decomposition of $\overline{W}_{[k_1]} = \sum_{(k_2,\ldots,k_N)} \overline{W}_{[k_1],(k_2,\ldots,k_N)}$ implies that
\begin{align}
 0 \le \overline{W}_{[k_1],(k_2,\ldots,k_N)} \le \overline{W}_{[k_1]},
\end{align}
which further implies that $\ker \overline{W}_{[k_1]} \subseteq \ker \overline{W}_{[k_1],(k_2,\ldots,k_N)}$ and therefore, together with $\overline{W}_{[k_1]} = \widetilde J_{k_1} W_{[k_1]} \widetilde J_{k_1}^\dagger$, that
\begin{align}
 \supp \overline{W}_{[k_1],(k_2,\ldots,k_N)} \subseteq \supp \overline{W}_{[k_1]} \subseteq \range \widetilde J_{k_1}.
\end{align}
It then follows that $(\widetilde J_{k_1} \widetilde J_{k_1}^\dagger) \overline{W}_{[k_1],(k_2,\ldots,k_N)} (\widetilde J_{k_1} \widetilde J_{k_1}^\dagger) = \overline{W}_{[k_1],(k_2,\ldots,k_N)}$, i.e.\
\begin{align}
 \overline{W}_{[k_1],(k_2,\ldots,k_N)} = \widetilde J_{k_1} W_{(k_1,k_2,\ldots,k_N)} \widetilde J_{k_1}^\dagger
\end{align}
and taking the partial sums,
\begin{align}
 \overline{W}_{[k_1],(k_2,\ldots,k_n)} = \widetilde J_{k_1} W_{(k_1,k_2,\ldots,k_n)} \widetilde J_{k_1}^\dagger
\end{align}
for all $n=2,\ldots,N$.

The constraint of Eq.~\eqref{eq:prefix_cstrs_barW} can then be written
\begin{align}
 {}_{[1-A_O^{k_n}]}\Tr_{A_{II'O}^{\Nset\setminus\{k_1,\ldots,k_n\}}} \big( \widetilde J_{k_1} W_{(k_1,k_2,\ldots,k_n)} \widetilde J_{k_1}^\dagger \big) =0,
\end{align}
and, multiplying by $\ketbra{k_2}{k_2}^{F_{[k_1]}^{k_2}}$ on both sides of the LHS (noting that $\ketbra{k_2}{k_2}^{F_{[k_1]}^{k_2}} \widetilde J_{k_1} = \frac{1}{\sqrt{N-1}} \widetilde J_{k_1,k_2}$, with implicit identity operators and $\widetilde J_{k_1,k_2} := J_{k_1,k_2} \otimes \id^{A_{IO}^{\Nset\setminus k_1}}$), one obtains
\begin{align}
 {}_{[1-A_O^{k_n}]}\Tr_{A_{II'O}^{\Nset\setminus\{k_1,\ldots,k_n\}}} \big( \widetilde J_{k_1,k_2} W_{(k_1,k_2,\ldots,k_n)} \widetilde J_{k_1,k_2}^\dagger \big) =0.
\end{align}
Direct calculation then gives
\begin{widetext}
\begin{align}
 & {}_{[1-A_O^{k_n}]}\Tr_{A_{II'O}^{\Nset\setminus\{k_1,\ldots,k_n\}}} \big( \widetilde J_{k_1,k_2} W_{(k_1,k_2,\ldots,k_n)} \widetilde J_{k_1,k_2}^\dagger \big) \notag \\
 & \quad = \Tr_{A_{I'}^{\Nset\setminus\{k_1,\ldots,k_n\}}} \Big( J_{k_1,k_2} \otimes \id^{A_{IO}^{\{k_2,\ldots,k_n\}}} \big( {}_{[1-A_O^{k_n}]}\Tr_{A_{IO}^{\Nset\setminus\{k_1,\ldots,k_n\}}} W_{(k_1,k_2,\ldots,k_n)} \big) J_{k_1,k_2}^\dagger \otimes \id^{A_{IO}^{\{k_2,\ldots,k_n\}}} \Big) \notag \\
 & \quad = \Big( \id^{A_{IO}^{k_1}\to S_{[k_1]}^{k_2}} \otimes \id^{A_{IO}^{\{k_2,\ldots,k_n\}}} \big( {}_{[1-A_O^{k_n}]}\Tr_{A_{IO}^{\Nset\setminus\{k_1,\ldots,k_n\}}} W_{(k_1,k_2,\ldots,k_n)} \big) \id^{S_{[k_1]}^{k_2}\to A_{IO}^{k_1}} \otimes \id^{A_{IO}^{\{k_2,\ldots,k_n\}}} \Big) \notag \\
 & \hspace{30mm} \otimes \bigotimes_{\ell\in\{k_3,\ldots,k_n\}} \ketbra{0}{0}^{S_{[k_1]}^{\ell}} \otimes \bigotimes_{\ell'\in\{k_2,\ldots,k_n\}} \ketbra{k_2}{k_2}^{F_{[k_1]}^{\ell'}} = 0,
\end{align}
\end{widetext}
which, after tracing out the auxiliary systems $S_{[k_1]}^{\ell}$ and $F_{[k_1]}^{\ell'}$ on the last line, implies that
\begin{align}
 {}_{[1-A_O^{k_n}]}\Tr_{A_{IO}^{\Nset\setminus\{k_1,\ldots,k_n\}}} W_{(k_1,k_2,\ldots,k_n)} = 0,
\end{align}
i.e.\ that the constraint of Eq.~\eqref{eq:prefix_cstrs} is indeed also satisfied for $n>1$ -- which, by recursion, concludes the proof.
\end{proof}

\section{Discussion}

We resolved the main open question from Ref.~\cite{Wechs2019}, proving that the sufficient condition for causal separability provided therein is also a necessary condition, and hence provides a complete characterisation of causally separable process matrices.
It was shown in Ref.~\cite{wechs18} that the class of generalised quantum circuits with classical control of causal order, or QC-CCs, correspond precisely to the sufficient condition of Ref.~\cite{Wechs2019}, so our result shows that all causally separable processes have possible realisations as QC-CCs -- as also originally conjectured in Ref.~\cite{oreshkov16}.
This closes the question of the physical realisability of causally separable processes, since all QC-CCs can be given concrete physical realisations.
Moreover, the characterisation of Theorem~\ref{thm:charact} can be verified with semidefinite programming, so our results confirm that one can efficiently optimise over causally separable processes and efficiently find causal witnesses for any causally nonseparable process~\cite{Wechs2019}.

The key insight in the proof of Theorem~\ref{thm:charact} was the introduction of the ``coherent teleportation technique'' of Lemma~\ref{lem:coherent_teleportation}.
We are not aware of previous use of such a technique, which could be useful in other contexts, and it would be interesting to explore this direction further.
Variations of this technique could also be considered. For instance, the information about whose party $k_2$ the systems of $A_{k_1}$ is teleported to, in each branch (encoded in $\bigotimes_{k\in\Nset} \ket{k_2}^{F_{[k_1]}^{k}}$), may not need to be accessible to all other parties: it should be enough that only $k_1$ knows $k_2$ ($k_1$ should still receive $\ket{k_2}^{F_{[k_1]}^{k_1}}$), and that each other party $A_\ell$ only knows if they are the receiver (if they are $k_2$) or not (i.e.\ for them, it would suffice that $F_{[k_1]}^{\ell}$ is a qubit encoding this binary information). We leave the exploration of the full potentiality of the coherent teleportation approach for future research.

\bigskip

\paragraph*{Note added.}
While completing this work, we became aware of Ref.~\cite{wei2026}, which proves the same result as ours, also using a ``generalized teleportation construction''.

\medskip

\paragraph*{Use of AI statement/disclosure.}
AI models were used to obtain an initial proof of Theorem~\ref{thm:charact}. In particular, OpenAI's GTP-5.6-sol produced a proof that identified the ``coherent teleportation technique'', after some gentle encouragement to try harder, and helped simplify the proof somewhat. The proof it produced was however more unwieldy than the one presented here, and the authors further simplified and clarified the argument. The manuscript was entirely written and prepared by the authors, who take full responsibility for the correctness and content of this article.

\medskip

\paragraph*{Acknowledgements.}

This research was funded in part by l’Agence Nationale de la Recherche (ANR) projects ANR-15-IDEX-02 and ANR-22-CE47-0012, and the PEPR integrated project EPiQ ANR-22-PETQ-0007 as part of Plan France 2030.

For the purpose of open access, the authors have applied a CC-BY public copyright license to any Author Accepted Manuscript (AAM) version arising from this submission.

\bibliography{literature}

\appendix
\onecolumngrid

\subsection*{Appendix: Proof of Lemma~\ref{lem:coherent_teleportation}}
\label{app:proof_coherent_telep_lemma}

Here we prove the form of the resulting matrix in Lemma~\ref{lem:coherent_teleportation}, Eq.~\eqref{eq:coherent_teleportation}.
In the calculations below we freely change the order of tensor products, keeping track of their labels and of which ones get multiplied together.

\begin{proof}

From the definition of a conditional matrix, Eq.~\eqref{eq:conditional_W}, of the state $\Omega$~\eqref{eq:def_Omega}, and of the CP map $M^{k_1}$~\eqref{eq:def_Mk1}:
\begin{align}
(W\otimes \Omega)_{|M^{k_1}} & = \Tr_{k_1} \left[ \big( M^{k_1} \otimes \id^{\N \backslash k_1} \big) \, \cdot \, (W\otimes \Omega)\right] \notag \\
& = c_{k_1} \, \Tr_{k_1} \left[ \Big( \ketbra{\mu^{k_1}}{\mu^{k_1}} \otimes \bigotimes_{k'\in\Nset\setminus k_1} \id^{S_{[k']}^{k_1}F_{[k']}^{k_1}} \otimes \id^{\N \backslash k_1} \Big) \, \cdot \, \Big( W\otimes \bigotimes_{k'\in\Nset} \ketbra{\Omega_{[k']}}{\Omega_{[k']}} \Big) \right] \notag \\
& = c_{k_1} \, \Tr_{k_1} \left[ \Big( \ketbra{\mu^{k_1}}{\mu^{k_1}} \otimes \bigotimes_{k'\in\Nset\setminus k_1} \id^{S_{[k']}^{k_1}F_{[k']}^{k_1}} \otimes \id^{\N \backslash k_1} \Big) \, \cdot \, \Big( W\otimes \ketbra{\Omega_{[k_1]}}{\Omega_{[k_1]}} \otimes \bigotimes_{k'\in\Nset\setminus k_1} \ketbra{\Omega_{[k']}}{\Omega_{[k']}} \Big) \right] \notag \\
& = c_{k_1} \, \Tr_{k_1} \left[ \Big( \ketbra{\mu^{k_1}}{\mu^{k_1}} \otimes \id^{\N \backslash k_1} \Big) \, \cdot \, \Big( W\otimes \ketbra{\Omega_{[k_1]}}{\Omega_{[k_1]}} \Big) \right] \otimes \underbrace{\bigotimes_{k'\in\Nset\setminus k_1} \Tr_{S_{[k']}^{k_1}F_{[k']}^{k_1}} \ketbra{\Omega_{[k']}}{\Omega_{[k']}}}_{\Omega_{\setminus k_1}} \label{eq:proof_app_1}
\end{align}
(where $\id^{\N \backslash k_1}$ is the identity on all systems attributed to parties in $\N \backslash k_1$ -- not necessarily the same systems in the different expressions).

Introducing the computational basis $\{\ket{i}^X\}_{i}$ for the isomorphic spaces $\cH^{X^{(\prime)}} = \cH^{A_{IO}^{k_1}}, \cH^{S_{[k_1]}^{k_1}}, \cH^{S_{[k_1]}^{k_2}}$, recalling the expression for the maximally entangled state across any two of these spaces, $\ket{\Phi^+}^{XX'} \coloneqq \frac{1}{\sqrt{d_{X^{(\prime)}}}}\sum_i \ket{i}^X\otimes \ket{i}^{X'}$, and using the expressions of Eqs.~\eqref{eq:def_ket_Omega_k1} and~\eqref{eq:def_ket_mu_k1}, we can write, for some vector $\ket{w}\in\cH^{A_{IO}^\N}$:
\begin{align}
& \big( \bra{\mu^{k_1}} \otimes \id^{\N \backslash k_1} \big) \cdot \big( \ket{w}\otimes \ket{\Omega_{[k_1]}} \big) \notag \\
& = \frac{1}{\sqrt{N-1}} \Big( \!\bra{\Phi^+}^{A_{IO}^{k_1}/S_{[k_1]}^{k_1}} \otimes \!\!\sum_{k_2'\in\Nset\setminus k_1} \!\!\!\!\bra{k_2'}^{F_{[k_1]}^{k_1}} \otimes \id^{\N \backslash k_1} \Big) \cdot \Big( \ket{w} \otimes \!\sum_{k_2\in\Nset\setminus k_1} \!\!\!\ket{\Phi^+}^{S_{[k_1]}^{k_1}/S_{[k_1]}^{k_2}} \otimes \!\bigotimes_{\ell\in\Nset\setminus \{k_1,k_2\}}\!\!\! \ket{0}^{S_{[k_1]}^{\ell}} \otimes \bigotimes_{k\in\Nset} \ket{k_2}^{F_{[k_1]}^{k}} \Big) \notag \\
& = \frac{1}{d_{A_{IO}^{k_1}}\sqrt{N-1}} \sum_{k_2\in\Nset\setminus k_1} \sum_{i,i'} \Big(\! \bra{i'}^{A_{IO}^{k_1}} \otimes \bra{i'}^{S_{[k_1]}^{k_1}} \otimes \id^{\N \backslash k_1} \Big) \!\cdot\! \Big( \ket{w} \otimes \ket{i}^{S_{[k_1]}^{k_1}} \otimes \ket{i}^{S_{[k_1]}^{k_2}} \otimes \!\!\bigotimes_{\ell\in\Nset\setminus \{k_1,k_2\}}\!\!\! \ket{0}^{S_{[k_1]}^{\ell}} \otimes \!\bigotimes_{k\in\Nset\setminus k_1}\! \ket{k_2}^{F_{[k_1]}^{k}} \Big) \notag \\
& = \frac{1}{d_{A_{IO}^{k_1}}\sqrt{N-1}} \sum_{k_2\in\Nset\setminus k_1} \Big( \sum_{i} \ket{i}^{S_{[k_1]}^{k_2}}\bra{i}^{A_{IO}^{k_1}} \otimes \id^{A_{IO}^{\Nset\setminus k_1}} \Big)\ket{w} \otimes \!\bigotimes_{\ell\in\Nset\setminus \{k_1,k_2\}}\!\! \ket{0}^{S_{[k_1]}^{\ell}} \otimes \bigotimes_{k\in\Nset\setminus k_1} \ket{k_2}^{F_{[k_1]}^{k}} \notag \\
& = \frac{1}{d_{A_{IO}^{k_1}}} \Big( \underbrace{\frac{1}{\sqrt{N-1}} \sum_{k_2\in\Nset\setminus k_1} \underbrace{\id^{A_{IO}^{k_1}\to S_{[k_1]}^{k_2}} \otimes \!\bigotimes_{\ell\in\Nset\setminus \{k_1,k_2\}}\!\! \ket{0}^{S_{[k_1]}^{\ell}} \otimes \bigotimes_{k\in\Nset\setminus k_1} \ket{k_2}^{F_{[k_1]}^{k}}}_{J_{k_1,k_2}}}_{J_{k_1}} \otimes \id^{A_{IO}^{\Nset\setminus k_1}} \Big) \ket{w} = \frac{1}{d_{A_{IO}^{k_1}}} \widetilde J_{k_1} \ket{w}.
\end{align}

Considering some decomposition $W = \sum_{w,w'} \ketbra{w}{w'}$ of $W$, and injecting the result of the above computation back into Eq.~\eqref{eq:proof_app_1}, we get the expression of Eq.~\eqref{eq:coherent_teleportation}.
\end{proof}

\end{document}